\newtheoremstyle{custom}% name
  {3pt}%      Space above
  {3pt}%      Space below
  {\slshape}%         Body font
  {}%         Indent amount (empty = no indent, \parindent = para indent)
  {\bfseries}% Thm head font
  {.}%        Punctuation after thm head
  { }%     Space after thm head: " " = normal interword space;
   {}%         Thm head spec (can be left empty, meaning `normal')
\theoremstyle{custom}
\newtheorem{theorem}{Theorem}[section]
\newtheorem{proposition}[theorem]{Proposition}
\newtheorem{proposition/definition}[theorem]{Proposition/Definition}
\newtheorem{lemma}[theorem]{Lemma}
\newtheorem{corollary}[theorem]{Corollary}
\theoremstyle{definition}
\theoremstyle{remark}
\newtheorem{remark}[theorem]{Remark}
\newtheoremstyle{exercise}% name
  {3pt}%      Space above
  {6pt}%      Space below
  {}%         Body font
  {}%         Indent amount (empty = no indent, \parindent = para indent)
  {\bfseries}% Thm head font
  {:}%        Punctuation after thm head
  { }%     Space after thm head: " " = normal interword space;
   {}%         Thm head spec (can be left empty, meaning `normal')
\theoremstyle{exercise}
\newtheorem{exercise}[theorem]{Exercise}
\newtheoremstyle{exercises}% name
  {3pt}%      Space above
  {6pt}%      Space below
  {}%         Body font
  {}%         Indent amount (empty = no indent, \parindent = para indent)
  {\bfseries}% Thm head font
  {:}%        Punctuation after thm head
  {\newline}%     Space after thm head: " " = normal interword space;
   {}%         Thm head spec (can be left empty, meaning `normal')
\theoremstyle{exercise}
\newtheorem{exercises}[theorem]{Exercises}
\def\boxit#1{\vbox{\hrule height1pt\hbox{\vrule width1pt\kern3pt
  \vbox{\kern3pt#1\kern3pt}\kern3pt\vrule width1pt}\hrule height1pt}}
\def\BC{\mathbb C}\def\BF{\mathbb F}
\def\BH{\mathbb H}
\def\tdim{{\rm dim}}
\def\hd{,...,}
\def\ww{\wedge}
\def\11{\mathbf 1}
\def\a{\alpha}
\def\b{\beta}
\def\s{\sigma}
\def\k{\kappa}
\def\ot{{\mathord{ \otimes } }}
\def\ra{{\mathord{\;\rightarrow\;}}}
\def\dim{{\rm dim}\;}
\def\BF{\Bbb F}\def\BH{\Bbb H}
\def\s{\sigma}
\def\a{\alpha}
\def\b{\beta}
\def\FS{\mathfrak  S}
\def\BC{\mathbb  C}
\def\ci{\mathcal  I}
\def\hd{, \hdots ,}
\def\ra{\rightarrow}
\def\tperm{\operatorname{perm}}
\def\tdim{\operatorname{dim}}
\def\ww{\wedge}
\def\be{\begin{equation}}
\def\ene{\end{equation}}
\def\HP{HP}\def\HF{HF}
\begin{document}
\title[Minimal free resolutions of sub-permanents]  
{On minimal free resolutions of sub-permanents \\ and other ideals
arising in  complexity theory}

\author{Klim Efremenko}
\address{Department of Computer Science, Ben-Gurion University}
\email{klimefrem@gmail.com}

\author{J.M. Landsberg}
\thanks{Landsberg supported by NSF  DMS-1405348}
\address{Department of Mathematics, Texas A\&M University}
\email{jml@math.tamu.edu}

\author{Hal Schenck}
\thanks{Schenck supported by NSF DMS-1312071}
\address{Department of Mathematics, Iowa State University, Ames, Iowa 50011}
\email{hschenck@iastate.edu}

\author{Jerzy Weyman}
\thanks{Weyman supported by NSF DMS-1400740}
\address{Department of Mathematics, University of Connecticut}
\email{jerzy.weyman@gmail.com}

\subjclass[2001]{68Q17, 13D02, 14L30, 20B30}
\keywords{Computational Complexity, Free Resolution, Determinant, Permanent}

 \begin{abstract}
 We compute the linear strand of the minimal free resolution of the ideal generated by $k\times k$ sub-permanents of an
 $n\times n$ generic matrix and  of the ideal generated by square-free monomials of degree $k$.
 The latter calculation gives the full minimal free resolution by \cite{MR2371263}. 
 Our motivation is  to lay   groundwork for  the use of commutative algebra
 in algebraic complexity theory.
 We also compute several  Hilbert functions 
 relevant for complexity theory.   
\end{abstract}

\maketitle

\section{Introduction}

We study homological properties of two families of ideals over polynomial rings:
the  ideals $\ci^{sqf;n,k}\subset  \BC [x_1,\ldots ,x_n]$ generated by square-free monomials of degree $k$ in $n$ variables and
the    ideals $\ci^{perm_n,k}\subset  \BC[x_{i,j}]_{1\le i,j\le n}$ generated by $k\times k$ sub-permanents of an $n\times n$ generic matrix.
Recall that the  permanent of an $m\times m$ matrix $Y=(y_{i,j})$ is the polynomial
$$perm_m(Y) =\sum_{\s\in\FS_m} y_{1,\s(1)}y_{2,\s(2)}\cdots y_{m,\s(m)},$$
where $\FS_m$ denotes the symmetric group on $m$ elements.

  We obtain our results via   larger  ideals $I_{1\times k}(1,n)$  (resp. $I_{1\times k}(n,n)$).
The ideal $I_{1\times k}(1,n)$ is generated by all monomials of degree $k$ in $n$ variables. 
The ideal $I_{1\times k}(n,n)$ is generated by permanents of $k\times k$ matrices
produced from $X$ where   repetition of  rows and columns is allowed.
Invariantly, $I_{1\times k}(1,n)=\oplus_{j\geq k} S^j\BC^n$ is the ideal generated by $S^k\BC^n$
and $I_{1\times k}(n,n)\subset Sym(\BC^{n^2})$ is the ideal generated $S^k\BC^n\ot S^k\BC^n\subset S^k(\BC^{n }\ot \BC^n)$.
The main result in each case says that the linear strand of resolution of $\ci^{sqf;n,k}$ (resp. $\ci^{perm_n,k}$) 
is the  subcomplex of the  linear strand of the resolution of $I_{1\times k}(1,n)$
(resp. of $I_{1\times k}(n,n)$) consisting of elements of {\it regular weights} (cf. \S 2). 

Our motivation comes from  complexity theory. 
We seek to find   differences between the homological behavior of ideals generated by $k\times k$ minors
(i.e., subdeterminants) of the generic matrix and the ideals generated by $k\times k$ subpermanents. 
The ideal generated by square-free monomials arises as the $(n-k)$-th Jacobian ideal of the monomial $x_1x_2\ldots x_n$.

\subsection*{Acknowledgments} We thank the anonymous referee for numerous useful suggestions, including 
simplifications of several proofs. Efremenko, Landsberg and Weyman thank
the Simons Institute for the Theory of Computing, UC Berkeley, for providing a wonderful 
environment during the fall 2014 program {\it Algorithms and Complexity in Algebraic Geometry} to work on  this article.

\section{Preliminaries}

\subsection{Representation Theory   }\label{subsec:repthy} For proofs of the statements here, see, e.g., \cite{FH} or \cite[Ch.~2]{weyman}.
We work exclusively over the complex numbers $\BC$, although our results hold for an arbitrary field 
 of characteristic $0$. If $W$ is a $\BC$--vector space of dimension $n$, a choice of basis determines
 a maximal torus of diagonal matrices and a labeling of weights for the torus
 by  $n$-tuples $\lambda=(\lambda_1,\cdots,\lambda_n)\in\mathbb{Z}^n$. A weight  $\lambda$ is   {\it dominant } if $\lambda_1\geq\lambda_2\geq\cdots\geq\lambda_n$. Irreducible representations of $GL(W)$ are in one-to-one correspondence with dominant weights $\lambda$. Let $S_{\lambda}W$ denote the irreducible representation associated to $\lambda$.
 %, and define the {\it determinant} of $W$ to be $S_{(1,\cdots,1)}W=\bigwedge^n W$. 
 Write $|\lambda|=\lambda_1+\cdots+\lambda_n$
 for the size of $\lambda$.  A weight $\alpha =(\alpha_1,\ldots ,\alpha_n)$ is {\it regular} if each $\alpha_i$ is equal to $0$ or $1$.
The regular weights will play an important role in stating our results.

 When $\lambda$ is a dominant weight with $\lambda_n\geq 0$, we say that $\lambda$ is a {\it partition} of $r=|\lambda|$, and we write $\lambda\vdash r$.
 When   dealing with partitions we often omit the trailing zeros. Associated to a partition is its  {\it Young diagram} which consists of left--justified rows of boxes, with $\lambda_i$ boxes in the  $i$--th row: for example, the Young diagram associated to $\lambda=(5,2,1)\vdash 8$ is
\[\Yvcentermath1\yng(5,2,1).\]
 The {\it transpose} $\lambda'$ of a partition $\lambda$ is obtained by transposing the corresponding Young diagram. For the example above, $\lambda'=(3,2,1,1,1)$.

 Given finite dimensional $\BC$--vector spaces $F,G$,
 the {\it Cauchy formulas}   describe the decomposition of the symmetric and exterior powers of $F\otimes G$ into a sum of irreducible $GL(F)\times GL(G)$--representations, see, e.g.,  \cite[Cor.~2.3.3]{weyman}:
\begin{equation}\label{eq:cauchy}
\begin{aligned}
 Sym^n(F\otimes G)=\bigoplus_{\lambda\vdash n} S_{\lambda}F\otimes S_{\lambda}G, \\
  \bigwedge^n(F\otimes G)=\bigoplus_{\lambda\vdash n} S_{\lambda}F\otimes S_{\lambda'}G.
\end{aligned}
\end{equation}

Let  $I_{a\times b}(m,n)$ denote the ideal   generated by $S_{b^a} \BC^m\otimes S_{b^a}\BC^n$.

\subsection {$GL_m$ and $\FS_m$ representations}
Let  $ E=\BC^m$ equipped with its standard basis. The symmetric group $\FS_m$  is then contained in $GL(E)$ as the permutation matrices.
Consider the irreducible representation $S_\lambda E$ where $\lambda$ is a partition of $m$. Inside of $S_\lambda E$ we have the   $\FS_m$-submodule spanned by the  elements
of   regular weight $(1^m)=(1,1,\ldots ,1)$. This submodule is denoted $[\lambda]$, the Specht module corresponding to $\lambda$. The
representations $[\lambda]$  are the distinct  irreducible representations of $\FS_m$ (see, e.g.,  \cite{MR0414794}).
Write 
$$(S_\lambda \BC^m)_{reg}=[\lambda].$$

Recall that for finite groups $H\subset G$, and an $H$-module $W$,
$Ind^G_H W=\BC[G]\ot_{\BC[H]}W$ is the induced $G$-module.
For $n\ge m$ and   $\lambda\vdash m$, 
$$(S_\lambda \BC^n)_{reg}\equiv Ind_{\FS_m\times \FS_{n-m}}^{\FS_n} ([\lambda]\otimes [n-m]).$$

Introduce the notation
$\tilde \FS_k=\FS_k\times \FS_{n-k}\subset \FS_n$, and if $\pi$ is a partition of $k$, 
write $\widetilde{[\pi]}=[\pi]\times [n-k]$ for the $\tilde\FS_k$ module that is $[\pi]$ as an $\FS_k$-module and trivial
as an $\FS_{n-k}$-module.

\subsection{Finite free resolutions}
 Let $S=\BC[x_1,\ldots ,x_N]$  be the ring of polynomials in $N$ variables
equipped with its grading by degree.
Let  $S_i$ denote its $i$-th graded component.
Let $S_+$ denote the maximal ideal $S_+=\oplus_{i>0}S_i$.
Let $M=\oplus_{i\ge 0}M_i$ be a graded $S$-module.
A complex of free graded $S$-modules
$${\mathbb F}:0\ra \mathbb{F}_N\buildrel{d_N}\over\ra \mathbb{F}_{N-1}\ra\cdots\ra \mathbb{F}_1\buildrel{d_1}\over\ra \mathbb{F}_0$$
is a minimal free resolution of $M$ if the only homology of $\mathbb F$ is $H_0(\mathbb{F})=M$,
and $d_i$ are maps of degree $0$ such that $d_i(\mathbb{F}_i)\subset S_+\mathbb{F}_{i-1}$ for all $i$.

Define the {\it graded Betti numbers} $\beta_{i,j}$ of $M$ by  
$$\mathbb{F}_i=\oplus_{j\ge 0} S(-i-j)^{\beta_{i,j}}$$
where $S(-m)$ denotes a copy of $S$ with generator in degree $m$.

If $M$ is an ideal   generated in degree $k$, then  $\mathbb{F}_i=\oplus_{j\ge k} S(-i-j)^{\beta_{i,j}}$.
  The linear strand of $\mathbb F$ is a subcomplex

$${\mathbb F}^{lin}:0\ra \mathbb{F}^{lin}_N\buildrel{d_N}\over\ra \mathbb{F}^{lin}_{N-1}\ra\cdots\ra \mathbb{F}^{lin}_1\buildrel{d_1}\over\ra \mathbb{F}^{lin}_0$$
where $\mathbb{F}^{lin}_i=S(-i-k)^{\beta_{i,i+k}}$.
The graded Betti numbers $\beta_{i,j}$ have the interpretation in terms of $Tor$ functors
$$\beta_{i,j}=dim_{\BC} Tor^S_i (S/S_+ ,M)_{i+j}$$
where the subscript denotes the homogeneous component.
This means that $\beta_{i,j}$ can be calculated as 
\begin{equation}
Tor^S_i (S/S_+, M)_{i+j}=H_i (\mathbb C(x_1,\ldots ,x_N; M))_{i+j}\label{torformula}
\end{equation}
where 
%$\mathbb C(x_1,\ldots ,x_N;M)= \mathbb C (x_1,\ldots ,x_N)\otimes_S M$,
%and 
$\mathbb C(x_1,\ldots ,x_N;M)$ is the {\it  Koszul complex} 
defined by
$\mathbb C(x_1,\ldots ,x_N;M)_i=\bigwedge^i\BC^N\ot_S M $ and the differential
$
d: \bigwedge^i\BC^N\ot_S M \ra \bigwedge^{i-1}\BC^N\ot_S M $
by 
$$
d(e_{j_1}\ww\cdots \ww e_{j_i}\ot m)=\sum_{u=1}^i(-1)^{u+1} e_{j_1}\ww\cdots \ww \hat e_{j_u}\ww\cdots \ww e_{j_i}\ot x_{j_u}m.
$$

\section{The linear strands of the minimal free resolutions of the ideals   $\ci^{perm_n,k}$}

\subsection {The resolution of $I_{1\times k}(n,n)$}

The ideal $I_{1\times k}(n,n)$ is  the ideal generated by $S_kE\otimes S_kF\subset S^k(E\ot F)$, where $E,F\simeq \BC^n$.
The minimal free resolution of this ideal is   known (see, e.g.,  \cite{MR3565359}, where it is denoted by $I_{1\times k}$).
The linear components  of this resolution are generated by 

\begin{equation}
\label{sreslin}
\underline {\mathbb {F}}_j^{lin}= \bigoplus_{a+b=j}S_{(k+b,1^{a}) }E.
\end{equation}
So the $j$-th linear term is $\mathbb {F}_j^{lin}=\underline {\mathbb {F}}_j^{lin}\otimes S_{( k+a,1^{b}) }F\otimes S(-k-j)$.

Since the resolution is $GL(E)\times GL(F)$-equivariant, each module in the complex has a double weight decomposition  induced by the restricted action of pairs of diagonal matrices.

\subsection {The main result}

We work over $S=\BC[x_{i,j}]_{1\le i,j\le n}=Sym(E\otimes F)$.

Define a sub-complex $\mathbb {H}^{lin}$ of the complex $\mathbb{F}^{lin}$ given by   \eqref{sreslin}
by setting   $\underline{\BH}_j^{lin}$ to be the subspace  of $\underline{\BF}_j^{lin}$ spanned  by the basis elements of regular content.
Note  that $\mathbb {H}^{lin}$ is indeed a sub-complex of $\mathbb{F}^{lin}$.
Let $E_j\subset E$, $F_j\subset F$ denote the span of the first $j$ basis vectors.

\begin{theorem} \label{mjthm}
When $k>1$, the complex $\mathbb{H}^{lin}$ is the  linear part of the minimal free resolution of the ideal $\ci^{perm_n,k}$.
Moreover, $\tdim \underline{\mathbb{H}}_{j}={\binom{n}{\k+j }}^2\binom {2(\k+j-1)}{j }$.

As an $\FS_n\times \FS_n$-module,  
\be 
\underline{\mathbb{H}}^{lin}_{j}= Ind_{\tilde\FS_{ {\k+j}}\times\tilde\FS_{ {\k+j }}}^{\FS_{ n}\times \FS_{ n}}
(\bigoplus_{a+b=j}\widetilde{[\k+b ,1^{a}]}_{E_{\k+j}}\ot \widetilde{[\k+a,1^b]}_{F_{\k+j }}) .
\ene
\end{theorem}

\begin{proof}
Consider  the complex $\mathbb{F}^{lin}$ giving the  linear strand of the ideal $I_{1\times k}(n,n)$.
The term $\mathbb{F}_0$ consists of  the generators of $I_{1\times k}(n,n)$,  the space $S_kE\otimes S_kF$.

 Inside $S_k E\ot S_k F$ is the ideal generated by the sub-permanents 
 which consists of the subspace of 
 %weight spaces $(p_1\hd p_n)\times (q_1\hd q_n)$,
 %where all $p_i,q_j$ are either zero or one, i.e.,  of 
 regular weights.
  Note that the set of regular vectors in any
 $E^{\ot m}\ot F^{\ot m}$ (where we assume $m\leq n$ in order for the set of such vectors to be nonempty) spans a   $\FS_E\times \FS_F$-submodule.
 
  The linear strand of the $j$-the term in the minimal free resolution
 of the ideal $\ci^{perm_n,k}$  is also   a 
 $\FS_E\times \FS_F$-submodule of
 $\mathbb{F}_j$.
 We claim this sub-module is generated by  the span of the regular vectors.  
 In what follows   $p(i_1,\ldots ,i_k;j_1,\ldots , j_k)$ denotes the sub-permanent formed from   rows $i_1,\ldots i_k$ and columns $j_1,\ldots ,j_k$.

We work by induction, the case $j=0$ was discussed above.
Assume the result has been proven up to homological degree $j-1$ and consider  the homological degree $j$ and homogeneous degree $k+j$.
The generators of the $j$-th module in the linear strand of the resolution of $\ci^{perm_n,k}$ have to be contained in linear part of 
 $\mathbb{H}^{lin}_{j-1}$, so all its
weights are either regular, or  such that one of the row indices $i_{\a}$  is $2$, and/or one
of the column indices $j_{\b}$  is $2$, and all other $p_u,q_u$ are zero or $1$. Call such 
a weight {\it sub-regular}. It remains to show that no linear syzygy with a sub-regular
weight can appear.
To do this we show that no sub-regular weight vector in 
$(\mathbb{F}_{j})_{subreg}$    maps to zero
in $(\mathbb{H}_{j-1})\cdot (E\ot F)$.

First consider the case where both the $E$ and $F$
weights are sub-regular, then (because the space is a $\FS_E\times \FS_F$-module),
the weight
$(2,1\hd 1,0\hd 0)\times (2,1\hd 1,0\hd 0)$ must appear in the syzygy.
The only  way for this to appear is to have a term  divisible by $x_{1,1}$.
But,  since $x_{1,1}$
is not a  zero-divisor in $Sym(V)$, such a term  cannot map to zero  because our syzygy is a syzygy of degree zero
multiplied by $x_{1,1}$. But by minimality no such syzygy  exists.

Finally consider the case where there is
a vector of weight $(2,1^{j+k-2})\times (1^{j+k})$ appearing.
Here it is more convenient to look at the calculation of the free resolution using the Koszul complex.
Such a syzygy would give   a Koszul cycle with   summands of the form
\begin{equation}\label{sum}
z=\sum_{t}a_t  e_{a_{1,t},b_{1,t}}\wedge\cdots\wedge e_{a_{j,t},b_{j,t}}\otimes p(I_t;J_t)
\end{equation}
where $p(I_t;J_t)$ are subpermanents formed from distinct rows and columns and $a_t\in \BC$. 
The total weight is $(2,1^{k+j-2})\times (1^{k+j})$ and  the Koszul differential $d(z)$ is zero.
Consider  this differential. The coefficients of all the basis elements $e_{a_1,b_1}\wedge\cdots\wedge e_{a_{j-1},b_{j-1}}$ of $d(z)$ have to be zero.
There are  three kinds of basis elements: the indices $a_1,\ldots ,a_{j-1}$ can contain number $1$ twice, once or not contain $1$ at all.
Consider  the basis elements not containing $1$, say the element $e_{2,1}\wedge\cdots\wedge e_{j,j-1}$. The only elements that can appear on  
the right hand side of the tensor product in $d(z)$
are the elements $x_{1,s}p(1,j+1,j+2,j+k-1;j,j+1,j+2,\ldots ,\hat{s},\cdots ,j+k)$, for $s=j, j+1,\ldots, j+k$.

\begin{lemma}\label{x1slem}  Let $k>1$. The elements $x_{1,s}p(1,j+1,j+2,j+k-1;j,j+1,j+2,\ldots ,\hat{s},\cdots ,j+k)$, for $s=j, j+1,\ldots, j+k$ are linearly independent in $S$.
\end{lemma}

\begin{proof} 

After re-labeling, the lemma  amounts to showing the polynomials $x_{1,1}P_1,\ldots , x_{1,k+1}P_{k+1}$ are linearly independent, where
$P_i$ is the permanent of the matrix obtained by removing the $i$-th column of
$$\left(\begin{matrix} x_{1,1}&x_{1,2}&\ldots&x_{1,k+1}\\
x_{2,1}&x_{2,2}&\ldots&x_{2,k+1}\\
\ldots&\ldots&\ldots&\ldots\\
x_{k,1}&x_{k,2}&\ldots&x_{k,k+1}
\end{matrix}\right).$$
Say that
\begin{equation}\label{sum2}
\sum_{s=1}^{k+1}  b_s x_{1,s} P_s=0
\end{equation}
for some scalars $b_s$. We need to show that all $b_s$ are zero. By symmetry it suffices to show that $b_1=b_2=b_3=0$. So   set $x_{1,s}=0$ for $s>3$.
Using the Laplace expansion of permanents along the first row, and writing $P_{i,j}$ for  the permanent obtained by 
removing row $1$ and columns $i,j$ we can rewrite (\ref{sum2}) as
$$b_1x_{1,1}(x_{1,2}P_{1,2}+x_{1,3}P_{1,3})+b_2x_{1,2}(x_{1,1}P_{1,2}+x_{1,3}P_{1,3})+b_3(x_{1,1}P_{1,3}+x_{1,2}P_{2,3})=0$$
which gives $b_i+b_j=0$ for $1\le i<j\le 3$,  so $b_1=b_2=b_3=0$.
\end{proof}

Lemma \ref{x1slem} implies  that in all the summands in $z$  of \eqref{sum} with $a_t\ne 0$,  both 
appearances of the index $1$ have to occur among $a_{1,t},\cdots ,a_{j,t}$. 
Now consider     the coefficient of the basis element where $1$ occurs  among $a_1,\ldots ,a_{j-1}$, say, $e_{1,1}\wedge e_{2,2}\wedge\cdots \wedge e_{j-1,j-1}$ in $d(z)$.
We obtain a linear combination of the elements $x_{1,s} p(j,j+1,\cdots ,j+k-1; j,j+1,\cdots ,\hat{s},\cdots ,j+k)$ for $s=j,j+1,\ldots ,j+k$. But  these elements are trivially linearly independent (all monomials occurring in them are different) so all coefficients $a_t$ are zero.

The rest of Theorem \ref{mjthm} follows because
  if $\pi$ is a partition of $ {\k+j}$ then
the weight $(1\hd 1)$ subspace of $S_{\pi}E_{\k+j}$, considered
as an $\FS_{  E_{\k+j}}$-module, is $[\pi]$ (see, e.g., \cite{MR0414794}), and the space of regular vectors
in $S_{\pi}E\ot S_{\mu}F$  is $Ind_{\tilde\FS_{E_{\k+j}}\times \tilde\FS_{F_{\k+j}}}^{\FS_E\times \FS_F}
\widetilde{[\pi]}_E\ot \widetilde{[\mu]}_F$. In the formula for the
dimension of $\underline{\mathbb{H}}_{j}$, the factor  ${n\choose {k+j}}^2$ is explained by inducing. The 
 dimensions of hook Specht modules are binomial coefficients, $\dim [x,1^y]={{x+y-1}\choose y}$
So we need to prove that 
$$\sum_{a+b=j}{{k+j-1}\choose a}{{k+j-1}\choose b}= {{2(k+j-1)}\choose j}$$
This has a combinatorial explanation. Given $2(k+j-1)$ balls, 
$k+j-1$ white and $k+j-1$ black, both sides of the equation calculate number of 
choices of $k+j-1$ of them: the left side partitions into how 
many white (a) and how many black (b) are chosen. 
\end{proof}

\begin{remark}
For small $n$ and $\k$, computer computations show no additional first syzygies on the $\k \times \k$ sub-permanents
of a generic $n \times n$ matrix (besides the linear syzygies) in degree less than the
 degree of the Koszul relations $2\k$. For example, for  $\k=3$ and $n=5$, there are $100$ cubic generators for the ideal
and $5200$ minimal first syzygies of degree six. There can be at most $\binom{100}{2} = 4950$ 
Koszul syzygies, so there must  be additional non-Koszul first syzygies.
\end{remark} 

\section{The  minimal free resolutions of the ideals   $\ci^{sqf;n,k}$}

\subsection {The resolutions of $I_{1\times k}(1,n)$}

Next we consider the case  $E=\BC$, $F=\BC^n$.
In this case $S=Sym(F)$ and the ideal $I_{1\times k}(1,n)$ is just the ideal generated by all monomials of degree $k$. 
The resolution of this ideal is well-known,
see,  e.g.,  \cite{MR3565359} or   \cite{eisenbud}.

The whole resolution is linear and $GL(F)$-equivariant, and its $k$-th  term is

\begin{equation}
\label{monres}
\mathbb {F}_j=  S_{( k,1^{j}) }F\otimes S(-k-j).
\end{equation}
 
\subsection{The resolution of  $\ci^{sqf;n,k}$}

We work over $S=\BC[x_1,\ldots ,x_n]=Sym(F)$.

Define a subcomplex $\mathbb {H}^{lin}$ of the complex $\mathbb{F}^{lin}$ given by   \eqref{monres}
by setting   $\underline{\mathbb{H}}_j^{lin}$ to be the subspace  of $\underline{\mathbb{F}}_j^{lin}$ spanned  by the basis elements of regular weight.
Note  that 
\[
\mathbb {H}^{lin}=\oplus_j \underline{\mathbb{H}}_j^{lin}\ot S(-k-j)
\]
is indeed a subcomplex of $\mathbb{F}^{lin}$.

\begin{theorem} \label{monothm}
The complex $\mathbb{H}^{lin}$ is the  linear part of the minimal free resolution of the ideal $\ci^{sqf;n,k}$.
We have the  $\FS_n$-module decomposition
\be\label{mkchow}
\underline{\mathbb{H}}_j^{lin}=Ind_{\tilde \FS_{\k+j}}^{\FS_n} \widetilde{[\k,1^{j }]}  , 
\ene
which has dimension $\binom{\k+j-2}{j-1}\binom{n}{\k+j}$.
\end{theorem}

\begin{proof}

We want to show that $\mathbb{H}^{lin}$ is the  linear strand of the resolution of  $\ci^{sqf;n,k}$. We proceed by induction on $j$. The case $j=0$ is clear because 
the generators of $\ci^{sqf;n,k}$ are precisely the generators of $I_{1\times k}(1,n)$ with regular weights.
Assume we proved the result for $j-1$ and consider  the $j$-th module in the resolution. As with the subpermanent case, it is enough to consider  the elements of a subregular weight as 
linear relations between elements of regular weight   are either of regular or subregular weight.

Consider  the syzygies in homological dimension $j$ and in homogeneous degree $k+j$ in term of cycles in the Koszul complex
$\BC(x_1,\ldots ,x_n; \ci^{sqf;n,k})$.
These will be cycles of the form
$$z=\sum_t a_t e_{a_{1,t}}\wedge\ldots \wedge e_{a_{j,t}}\otimes x_{u_{1,t}}x_{u_{2,t}}\ldots x_{u_{k,t}}$$
where the total weight is $(2,1^{k+j-2})$ and all monomials $x_{u_{1,t}}x_{u_{2,t}}\ldots x_{u_{k,t}}$ are of regular weights, and $a_t$ are scalars.
In each summand with $a_t\ne 0$ we are forced to have one $1$ among $a_{i,t}$ and one $1$ among $u_{i,t}$. 
So we can assume that in each summand with $a_t\ne 0$ we have $a_{1,t}=1$ and $u_{1,t}=1$.
We have $d(z)=0$. But, looking at the coefficient of $d(z)$ with respect to the basis vector $e_{a_{2,t}}\wedge\ldots \wedge e_{a_{j,t}}$ we see that its coefficient
is just $a_t x_1x_{u_{1,t}}x_{u_{2,t}}\ldots x_{u_{k,t}}$ which forces
$a_t$ to be zero. The dimension formula follows as in Theorem 3.1.
 \end{proof}

\begin{remark} The easiest way to see that the resolution of the ideal $\ci^{sqf;n,k}$ is linear and to see the ranks of the modules is
to observe that
for the $k\times n$ matrix
$$M(A, X)=\left(\begin{matrix}a_{1,1}x_1&\ldots&a_{1,n}x_n\\
\ldots&\ldots&\ldots\\
a_{k,1}x_1&\ldots&a_{k,n}x_n
\end{matrix}\right)$$
where 
$$A=\left(\begin{matrix}a_{1,1}&\ldots&a_{1,n}\\
\ldots&\ldots&\ldots\\
a_{k,1}&\ldots&a_{k,n}
\end{matrix}\right)$$
is a matrix of scalars with all maximal minors non-zero,
  the ideal of maximal minors of the matrix $M(A, X)$ is just
  $\ci^{sqf;n,k}$, so the resolution in question is   an
  Eagon-Northcott complex (\cite{eisenbud} or \cite[6.1.6]{weyman}).
After  this paper was submitted, a   characteristic free description of the resolution of $\ci^{sqf;n,k}$, with explicit differentials  appeared in     \cite{2016arXiv160906396G}.
\end{remark}
\pagebreak

\section{Additional results}
For $P\in S^d\BC^N$, let  $I^{P,k}\subset S^k\BC^N$ denote the ideal generated by the
partial derivatives  of $P$ of order $d-k$. 

\subsection{Size two subpermanents}
\begin{theorem}\label{tpermtwo} Let $I_t^{\tperm_n,2}$ denote the degree
$t$ component of the ideal generated by the size two sub-permanents of
an $n\times n$ matrix, so 
$\tdim I_2^{\tperm_n,2}={\binom n2}^2$. 
Then
\begin{align*}
\tdim \BC[x_{i,j}]/I^{perm_n,2}_t=&
\binom{n^2+t-1}t-
 [{\binom{n}{t}}^2+
n^2+(t-1)(\binom{n^2}2-{\binom n2}^2)+2\binom{t-1}2({\binom n2}^2+n\binom n3)
\\
&+2n\sum_{j=3}^{t-1} \binom{t-1}j\binom n{j+1}],
\end{align*}
Where recall that $\binom nt=0$ for $t>n$, in which case the
  formula is $\tdim S^t\BC^{n^2}$ minus the  value of the  Hilbert polynomial at $t$.
\end{theorem}

% this defines an arrow that can be made as long as desired
\def \ahal{{\mathrel{\smash-}}{\mathrel{\mkern-8mu}}
{\mathrel{\smash-}}{\mathrel{\mkern-8mu}} {\mathrel{\smash-}}{\mathrel{\mkern-8mu}}}

First, the Hilbert polynomial:
 
 \begin{theorem}
For the ideal $\ci^{\tperm_n,2}$ of $2\times 2$ permanents of
an $n \times n$ matrix, the Hilbert polynomial of $Sym(V)/\ci^{\tperm_n,2}$ is
\begin{equation}\label{HPSR}
\sum\limits_{i=0}^n f_i {t-1 \choose i},
\end{equation}
where $f_i$ is the $i^{th}$ entry in the vector
\[
\Big[ n^2, {n^2 \choose 2}-{n \choose 2}^2, 2{n \choose 2}^2+2n{n \choose 3},2n{n \choose 4} ,2n{n \choose 5}, \ldots, 2n{n \choose n} \Big].
\]
\end{theorem}
\begin{proof}
  \cite[Thm. 3.2]{MR1777172} gives a Gr\"obner basis for $ \sqrt{\ci^{\tperm_n,2}}$, the radical of $\ci^{\tperm_n,2}$, 
and by   \cite[Thm. 3.3]{MR1777172}, $\sqrt{\ci^{\tperm_n,2}}/\ci^{\tperm_n,2}$ has finite length, so vanishes
in high degree. The Hilbert polynomial only measures dimension asymptotically,
so
\[
\HP(Sym(V)/\sqrt{\ci^{\tperm_n,2}},t) = \HP(Sym(V)/\ci^{\tperm_n,2},t).
\]
By \cite{MR1777172}, for any diagonal term order,  the Gr\"obner basis
for $\sqrt{\ci^{\tperm_n,2}}$ is given by quadrics of the form
\[
x_{ij}x_{kl}+x_{kj}x_{il} \mbox{ with } i < k,\mbox{ }j< l,
\]
and five sets of cubic monomials
\be\label{cubics}
\begin{array}{ccc}
x_{i_1j_1}x_{i_1j_2}x_{i_2j_3} & i_1>i_2 & j_1<j_2<j_3 \\
x_{i_1j_1}x_{i_2j_2}x_{i_2j_3} & i_1>i_2 & j_1<j_2<j_3 \\
x_{i_1j_1}x_{i_2j_1}x_{i_3j_2} & i_1<i_2<i_3 & j_1>j_2 \\
x_{i_1j_1}x_{i_2j_2}x_{i_3j_2} & i_1<i_2<i_3 & j_1>j_2 \\
x_{i_1j_1}x_{i_2j_2}x_{i_3j_3} &  i_1<i_2<i_3 & j_1>j_2>j_3.
\end{array}
\ene
The key observation is that all the cubic monomials are square-free,
as  are the initial terms of the quadrics. Thus  the initial
ideal of $\sqrt{\ci^{\tperm_n,2}}$ is a square-free monomial ideal and  corresponds to the
Stanley-Reisner ideal of a simplicial complex $\Delta$. By  
\cite[Lemma 5.2.5]{MR2011360}, the Hilbert polynomial is as in Equation~\eqref{HPSR}, where $f_i$ is
the number of $i$-dimensional faces of $\Delta$. As the vertex set of
$\Delta$ corresponds to all lattice points $(i,j)$ with $1\le i,j \le n$,
it is immediate that $f_0=n^2$.

Since $x_{ij}x_{kl}$ is a non-face if $i<k$, $j<l$, no  edge
connects a southwest lattice point to a northeast lattice point.
Hence, the edges of $\Delta$ consist of all pairs $(i,j),(k,l)$ with
$i\geq k$ and $j \geq l$, of which there are ${n^2 \choose 2}-{n \choose 2}^2$.

Next, consider  the triangles of $\Delta$. Equation~\eqref{cubics} says there are
no triangles in $\Delta$ of the types in Figure~1. Also, there are 
no triangles which contain an edge connecting vertices at positions $(i,j)$ and $(k,l)$ 
with $i<k$,$j<l$. Thus, the only triangles in $\Delta$ are right triangles, but with
hypotenuse sloping from northwest to southeast. For a lattice point
$v$ at position $(d,e)$ there are exactly $(d-1)(e-1)$ right triangles
having $v$ as their unique north-most vertex. In the rightmost column $n$,
there are no such triangles, in the next to last column $n-1$ there are
$(n-1)+(n-2)+ \cdots = {n \choose 2}$ such triangles.
Continuing this way yields a total count of
\[
(n-1){n \choose 2} + (n-2){n \choose 2} + \cdots 2{n \choose 2}+{n \choose 2} = {n \choose 2}^2
\]
such right triangles, and taking into account the right triangles for which
$v$ is the unique south-most vertex doubles this number.
\begin{figure}\label{ntris}
\renewcommand{\thesubfigure}{\arabic{subfigure}}
\tikzstyle{dot}=[shape=circle,fill=black,draw=black,fill opacity=1,inner sep=.4 mm]

\begin{subfigure}[b]{.4\textwidth}
\centering
\begin{tikzpicture}[scale=1.2]
\draw[step=.8 cm] (-.2,-.2) grid (.8,1.6);
\draw (-.2,0) node[left]{$j_1$};
\draw (-.2,.8) node[left]{$j_2$};
\draw (-.2,1.6) node[left]{$j_3$};
\draw (0,-.2) node[below]{$i_3$};
\draw (.8,-.2) node[below]{$i_1=i_2$};
\filldraw[draw=black,fill=gray,fill opacity=.5] (.8,0) node[dot]{}--(0,.8) node[dot]{}--(0,1.6) node[dot]{}--cycle;
\end{tikzpicture}
\caption{}
\end{subfigure}
\begin{subfigure}[b]{.4\textwidth}
\centering
\begin{tikzpicture}[scale=1.2]
\draw[step=.8 cm] (-.2,-.2) grid (.8,1.6);
\draw (-.2,0) node[left]{$j_1$};
\draw (-.2,.8) node[left]{$j_2$};
\draw (-.2,1.6) node[left]{$j_3$};
\draw (0,-.2) node[below]{$i_2=i_3$};
\draw (.8,-.2) node[below]{$i_1$};
\filldraw[draw=black,fill=gray,fill opacity=.5] (.8,0) node[dot]{}--(.8,.8) node[dot]{}--(0,1.6) node[dot]{}--cycle;
\end{tikzpicture}
\caption{}
\end{subfigure}

\begin{subfigure}[b]{.4\textwidth}
\centering
\begin{tikzpicture}[scale=1.2]
\draw[step=.8 cm] (-.2,-.2) grid (1.6,.8);
\draw (-.2,0) node[left]{$j_3$};
\draw (-.2,.8) node[left]{$j_1=j_2$};
\draw (0,-.2) node[below]{$i_1$};
\draw (.8,-.2) node[below]{$i_2$};
\draw (1.6,-.2) node[below]{$i_3$};
\filldraw[draw=black,fill=gray,fill opacity=.5] (1.6,0) node[dot]{}--(.8,.8) node[dot]{}--(0,.8) node[dot]{}--cycle;
\end{tikzpicture}
\caption{}
\end{subfigure}
\begin{subfigure}[b]{.4\textwidth}
\centering
\begin{tikzpicture}[scale=1.2]
\draw[step=.8 cm] (-.2,-.2) grid (1.6,.8);
\draw (-.2,0) node[left]{$j_3$};
\draw (-.2,.8) node[left]{$j_1=j_2$};
\draw (0,-.2) node[below]{$i_1$};
\draw (.8,-.2) node[below]{$i_2$};
\draw (1.6,-.2) node[below]{$i_3$};
\filldraw[draw=black,fill=gray,fill opacity=.5] (1.6,0) node[dot]{}--(.8,0) node[dot]{}--(0,.8) node[dot]{}--cycle;
\end{tikzpicture}
\caption{}
\end{subfigure}

\begin{subfigure}[b]{\textwidth}
\centering
\begin{tikzpicture}[scale=1.2]
\draw[step=.8 cm] (-.2,-.2) grid (1.6,2.4);
\draw (-.2,0) node[left]{$j_3$};
\draw (-.2,.8) node[left]{$j_2$};
\draw (-.2,2.4) node[left]{$j_1$};
\draw (0,-.2) node[below]{$i_1$};
\draw (.8,-.2) node[below]{$i_2$};
\draw (1.6,-.2) node[below]{$i_3$};
\filldraw[draw=black,fill=gray,fill opacity=.5] (1.6,0) node[dot]{}--(.8,.8) node[dot]{}--(0,2.4) node[dot]{}--cycle;
\end{tikzpicture}\hspace{20 pt}
\begin{tikzpicture}[scale=1.2]
\draw[step=.8 cm] (-.2,-.2) grid (1.6,2.4);
\draw (-.2,0) node[left]{$j_3$};
\draw (-.2,1.6) node[left]{$j_2$};
\draw (-.2,2.4) node[left]{$j_1$};
\draw (0,-.2) node[below]{$i_1$};
\draw (.8,-.2) node[below]{$i_2$};
\draw (1.6,-.2) node[below]{$i_3$};
\filldraw[draw=black,fill=gray,fill opacity=.5] (1.6,0) node[dot]{}--(.8,1.6) node[dot]{}--(0,2.4) node[dot]{}--cycle;
\end{tikzpicture}
\caption{}
\end{subfigure}
\caption{Non-triangles of $\Delta$ from Equation~\eqref{cubics}}
\end{figure}
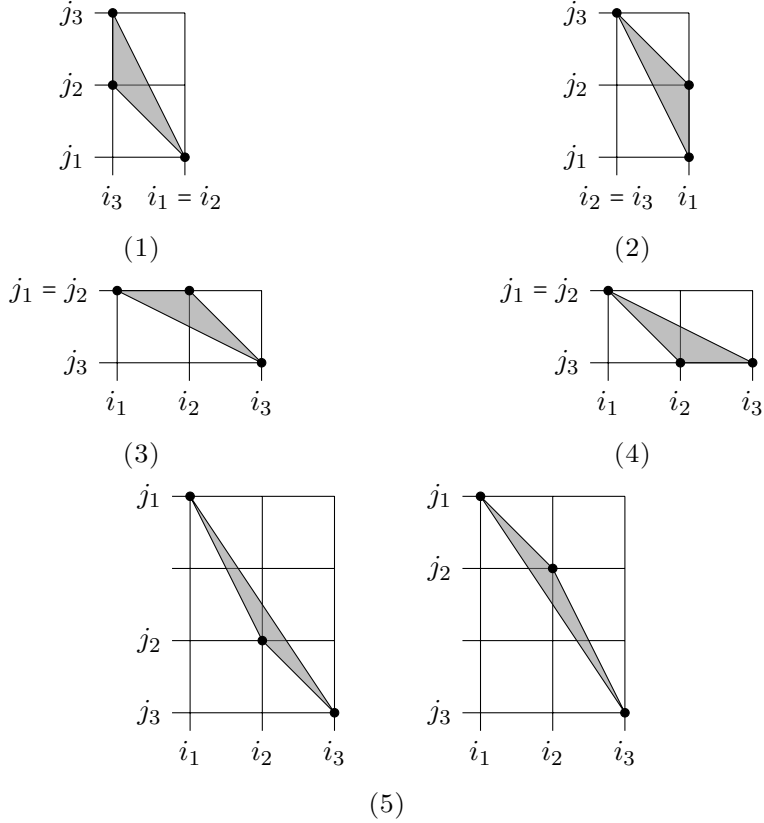

However, this count neglects thin triangles--those which have
all vertices in the same row or column. Since the number of thin triangles
is $2n{n \choose 3}$, the final count for the triangles of $\Delta$ is
\[
2{n \choose 2}^2+2n{n \choose 3}.
\]
For tetrahedra, the conditions of Equation~\eqref{cubics} imply that there
can only be thin tetrahedra, and an easy count gives $2n{n \choose 4}$ such.
The same holds for higher dimensional simplices, and concludes the proof.
\end{proof}
\begin{corollary}
For the ideal $\ci^{\tperm_n,2}$ of $2\times 2$ permanents of
an $n \times n$ matrix, the Hilbert function of $Sym(V)/\ci^{\tperm_n,2}$ is, when $t\geq 3$,
\begin{equation}\label{HFSR}
\HF(Sym(V)/\ci^{\tperm_n,2},t) = {n \choose t}^2 + \HP(Sym(V)/\ci^{\tperm_n,2},t),
\end{equation}
and it equals the Hilbert polynomial for $t>n$.
\end{corollary}
\begin{proof}
  The Hilbert function of
$\sqrt{\ci^{\tperm_n,2}}/\ci^{\tperm_n,2}$ in degree $t$ is ${n \choose t}^2$ by  \cite[Thm. 3.3]{MR1777172}. The result
follows by combining Equation~\ref{HPSR} with the short exact sequence
\[
0 \longrightarrow \sqrt{\ci^{\tperm_n,2}}/\ci^{\tperm_n,2} \longrightarrow Sym(V)/\ci^{\tperm_n,2} \longrightarrow Sym(V)/\sqrt{\ci^{\tperm_n,2}} \longrightarrow 0,
\]
and additivity of the Hilbert function.
\end{proof} 

For the purposes of comparing with other ideals, we rephrase this as: 

\begin{corollary}
$\tdim I_2^{\tperm_n,2}={\binom n2}^2$. 
For $t\geq 3$:
\begin{align*}
\tdim I^{perm_n,2}_t=&
\binom{n^2+t-1}t-
 [{\binom{n}{t}}^2+
n^2+(t-1)(\binom{n^2}2-{\binom n2}^2)+2\binom{t-1}2({\binom n2}^2+n\binom n3)
\\
&+2n\sum_{j=3}^{t-1} \binom{t-1}j\binom n{j+1}].
\end{align*}
\end{corollary}

\subsection{Hilbert functions for ideals of square-free monomials}
Although these can be deduced from our resolutions, we present the Hilbert functions and polynomials
for the ideals generated by square-free monomials.

\begin{proposition}\label{klimest} The Hilbert function of $\ci^{x_1\cdots x_n,\k}$  in degree $\k+t$ is
%\be\label{chowhil}
%Hilb(t)=\sum_{i=0}^k\binom ni\binom{t-1}{n-i-1}
%\ene
%\end{proposition}
%It is useful to rephrase this as
\be\label{chowhilb}
\tdim \ci^{(x_1\cdots x_n),\k}_{\k+t}
=\sum_{j=0}^{n-\k}
\binom n{\k-j}\binom {\k+t-1}{\k+j-1}
\ene
\end{proposition}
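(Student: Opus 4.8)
The plan is to use the fact that $\ci := \ci^{x_1\cdots x_n,\k}$ is a monomial ideal, so that $\ci_{\k+t}$ has a basis consisting of the monomials of degree $\k+t$ that lie in $\ci$, and then to enumerate those monomials combinatorially. The first step will be to characterize them: since the generators of $\ci$ are exactly the square-free monomials of degree $\k$, a monomial $x_1^{a_1}\cdots x_n^{a_n}$ lies in $\ci$ if and only if it is divisible by one of these generators, which in turn happens if and only if its support $\{i : a_i>0\}$ has size at least $\k$. Hence $\tdim \ci_{\k+t}$ equals the number of degree-$(\k+t)$ monomials in $x_1,\ldots,x_n$ whose support has size $\ge\k$.

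The second step is a stratification of this count by support size. For each $m$ with $\k\le m\le n$ there are $\binom nm$ ways to choose an $m$-element set of variables, and once the support is fixed the monomials of degree $\k+t$ with that exact support correspond bijectively to compositions of $\k+t$ into $m$ strictly positive parts, of which there are $\binom{\k+t-1}{m-1}$ by a stars-and-bars argument. Summing $\binom nm\binom{\k+t-1}{m-1}$ over $\k\le m\le n$ --- the terms with $m>\k+t$ drop out automatically, since then $\binom{\k+t-1}{m-1}=0$ --- and re-indexing by $m=\k+j$ gives precisely \eqref{chowhilb}.

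The argument is elementary and I do not expect a substantive obstacle; the two points that deserve (short) care are the equivalence in the first step between membership in $\ci$ and having support of size $\ge\k$, and the bookkeeping for the degenerate ranges of summation (e.g. $t=0$, or $n-\k$ exceeding $t$). As an independent check one may instead count in the quotient: $\tdim\bigl(Sym(V)/\ci\bigr)_{\k+t}$ is the number of degree-$(\k+t)$ monomials whose support has size at most $\k-1$, namely $\sum_{m=0}^{\k-1}\binom nm\binom{\k+t-1}{m-1}$, and subtracting this from $\tdim S^{\k+t}\BC^n=\binom{n+\k+t-1}{\k+t}$ recovers \eqref{chowhilb} by the Vandermonde identity $\sum_m\binom nm\binom{\k+t-1}{m-1}=\binom{n+\k+t-1}{\k+t}$. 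One could also cross-check \eqref{chowhilb} against the Euler characteristic $\sum_j(-1)^j\tdim F_{j,\k+t}$ of the linear resolution in Theorem \ref{chowthm}.
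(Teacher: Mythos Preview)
Your argument is correct and follows essentially the same route as the paper: both observe that $\ci_{\k+t}$ has a basis consisting of the degree-$(\k+t)$ monomials with support of size at least $\k$, and both stratify the count by support size. The paper phrases this as dividing each such monomial by $x_1\cdots x_n$ and stratifying by the degree $i$ of the resulting denominator, but since $i=n-m$ this is the identical sum after the change of variable $m=n-i$.

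One caveat: your re-indexing $m=\k+j$ actually yields
\[
\sum_{j=0}^{n-\k}\binom{n}{\k+j}\binom{\k+t-1}{\k+j-1},
\]
not the printed \eqref{chowhilb} with $\binom{n}{\k-j}$. These disagree in general (e.g.\ for $n=5$, $\k=2$, $t=2$ the correct dimension is $65$ while the printed formula gives $46$). The $\k-j$ in \eqref{chowhilb} is a typo for $\k+j$; both your argument and the paper's own proof (once one reads ``denominator of degree at most $n-\k$'' for the stated ``at most $\k$'') establish the corrected version. So your final claim that the re-indexed sum ``gives precisely \eqref{chowhilb}'' is right in spirit but should be accompanied by this correction.
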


\begin{proof} The ideal in degree $d=t+\k$ has a basis of the distinct  monomials of degree $d$  containing at least $n-k$ distinct indices. When
we divide such a basis vector  by $x_1\cdots x_n$ the denominator will have degree at most $\k$. 
For each $i\leq \k$, the space of  possible numerators with a denominator of degree $i$
that is fixed, has dimension  $\tdim S^{d-n+i}\BC^{n-i}$, and there
are $\binom ni$ possible denominators. Summing over $i$ gives the result.
\end{proof}

For the Hilbert function  of
the coordinate ring, we have the following expression: 

\begin{proposition} The Hilbert function of $Sym(\BC^n)/\ci^{x_1\cdots x_n,\k}$ in degree $t$ is
\be\label{anotherexpr}
\tdim (Sym(\BC^n)/\ci^{x_1\cdots x_n,\k})_t
=\sum\limits_{j=0}^{n-\k-2}{n \choose j+1}{t-1 \choose j},
\ene
if $t \ge n-\k-1$, and ${n+t-1 \choose n-1}$ if $t < n-\k-1$.
\end{proposition}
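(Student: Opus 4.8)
The plan is to deduce the Hilbert function of the quotient ring from the Hilbert function of the ideal, which is already available from Proposition \ref{klimest}. We have the tautological identity
\[
\tdim (Sym(\BC^n)/\ci^{x_1\cdots x_n,\k})_{t} = \tdim S^{t}\BC^n - \tdim \ci^{x_1\cdots x_n,\k}_{t} = \binom{n+t-1}{n-1} - \tdim \ci^{x_1\cdots x_n,\k}_{t},
\]
so everything reduces to a binomial-coefficient manipulation. When $t < \k$ the ideal is zero in degree $t$, which already handles (a superset of) the range $t < n-\k-1$ only when $n-\k-1 \le \k$; in general for the range $t < n-\k-1$ one must argue separately, and the cleanest way is: a monomial of degree $t < n$ cannot involve all $n$ variables, but it \emph{can} involve $\k$ or more distinct variables once $t \ge \k$, so the claim ``$\ci$ vanishes in degrees $<n-\k-1$'' is genuinely false in general, which means I should instead verify the stated formula only claims agreement with $\binom{n+t-1}{n-1}$ when $t<n-\k-1$, and check that this is consistent with \eqref{chowhilb} being zero there — hence the first real task is to show $\tdim \ci^{x_1\cdots x_n,\k}_{t}=0$ for $t<n-\k-1$ is \emph{not} what is needed; rather I will show directly that $\sum_{j=0}^{n-\k}\binom{n}{\k-j}\binom{t-1}{\k+j-1}=0$ is false and re-examine. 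The correct reading is that $\k+t$ in \eqref{chowhilb} is replaced by plain $t$ here, so set $t = \k + t'$ and substitute.

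\textbf{Main computation.} Writing $d=t$ for the degree in the quotient ring, I would substitute $t' = d-\k$ into \eqref{chowhilb}, giving
\[
\tdim \ci^{x_1\cdots x_n,\k}_{d} = \sum_{j=0}^{n-\k} \binom{n}{\k-j}\binom{d-1}{\k+j-1},
\]
and then I would subtract this from $\binom{n+d-1}{n-1}$. The strategy is to recognize the Vandermonde-type convolution: $\binom{n+d-1}{n-1} = \sum_{i} \binom{n}{i}\binom{d-1}{i-1}\cdot(\text{something})$ is not literally true, but the identity $\binom{n+d-1}{n-1} = \sum_{\ell \ge 0} \binom{n}{\ell}\binom{d-1}{\ell - 1} + \binom{d-1}{\ell}$-style reindexing via the Chu–Vandermonde identity $\sum_k \binom{n}{k}\binom{d-1}{m-k}=\binom{n+d-1}{m}$ summed appropriately over $m$ should produce a clean telescoping. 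Concretely, I expect that after writing $\binom{n}{\k-j} = \binom{n}{n-\k+j}$ and reindexing $i = n-\k+j$ (so $i$ ranges over $n-\k,\dots,n$), the ideal's Hilbert function becomes $\sum_{i=n-\k}^{n}\binom{n}{i}\binom{d-1}{i-(n-\k+1)+(n-2\k)}$-type expression that, subtracted from the full $\sum_{i=0}^{n}\binom{n}{i}\binom{d-1}{\cdots}$ decomposition of $\binom{n+d-1}{n-1}$, leaves exactly the partial sum $\sum_{j=0}^{n-\k-2}\binom{n}{j+1}\binom{d-1}{j}$ claimed. The case distinction at $d < n-\k-1$ then falls out automatically: there $\binom{d-1}{j}=0$ for $j \ge d$, and in fact the partial sum $\sum_{j=0}^{n-\k-2}\binom{n}{j+1}\binom{d-1}{j}$ should collapse back to the full binomial $\binom{n+d-1}{n-1}$ because all the ``missing'' terms $\binom{n}{j+1}\binom{d-1}{j}$ with $j \ge n-\k-1$ vanish when $j \ge d-1 \ge$ those bounds.

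\textbf{Expected obstacle.} The one genuinely delicate point is getting the index bookkeeping in the Chu–Vandermonde application exactly right — in particular, identifying which ``half'' of the Vandermonde convolution $\binom{n+d-1}{n-1}=\sum_{k}\binom{n}{k}\binom{d-1}{n-1-k}$ corresponds to the ideal and which to the quotient, and confirming that the boundary index $j = n-\k-2$ (rather than $n-\k-1$ or $n-\k$) is the correct cutoff. I would verify this by checking small cases: $\k = n$ (the ideal is the irrelevant ideal's power, quotient is one-dimensional in degree $0$ only — actually $\k=n$ gives $\ci = (x_1\cdots x_n)$, so the quotient Hilbert function is $\binom{n+t-1}{n-1}$ for $t<n$ and $\binom{n+t-1}{n-1}-1$ thereafter, and the formula with $n-\k-2 = -2$ gives an empty sum $=0$, which is wrong — so the $t \ge n-\k-1$ branch must itself be interpreted with the convention that when $n-\k-2<0$ we are always in the $t<n-\k-1$ branch), and $\k = n-1$ (squarefree quadrics... rather squarefree monomials of degree $n-1$, a single orbit). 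Once the small cases pin down the conventions, the general identity is a routine induction on $d$ or a direct generating-function argument: both sides, as functions of $d$, satisfy the same Pascal recurrence and agree at $d = n-\k-1$, so they agree for all $d \ge n-\k-1$. The main obstacle is thus purely notational rather than mathematical, and I would allot the bulk of the writeup to stating the index conventions cleanly and then invoking the recurrence.
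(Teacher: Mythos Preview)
Your approach differs from the paper's. The paper defers the proof to the Stanley--Reisner machinery of the following subsection: since $\ci^{x_1\cdots x_n,\k}=\ci_{\Delta(n,\k)}$ is the Stanley--Reisner ideal of the $(\k-2)$-skeleton of $\Delta_{n-1}$, the Hilbert function of the quotient is read off from the $f$-vector of that skeleton. Unwound, this is simply the direct count: a degree-$t$ monomial survives in the quotient iff its support has size at most $\k-1$, and grouping such monomials by support size $s=j+1$ gives $\binom{n}{j+1}\binom{t-1}{j}$ of them for each $j$. Your plan---subtract $\dim\ci_t$ (from Proposition~\ref{klimest}) from $\binom{n+t-1}{n-1}$ and simplify via Chu--Vandermonde---reaches the same endpoint after one extra identity, so it is sound but slightly longer; the direct count (equivalently, the $f$-vector formula) makes the Vandermonde step unnecessary. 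That said, your write-up never actually carries out the subtraction and remains at the level of ``I expect this should telescope,'' so as written it is a plan rather than a proof.

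The tangles you hit in your small-case checks are not a defect of your method: the printed upper limit $n-\k-2$ and threshold $t\ge n-\k-1$ appear to have $\k$ and $n-\k$ swapped and should read $\k-2$ and $t\ge \k-1$. For instance with $n=5$, $\k=2$ the quotient has dimension $5$ in every degree $t\ge 1$, which matches $\sum_{j=0}^{\k-2}=\sum_{j=0}^{0}$ but not $\sum_{j=0}^{n-\k-2}=\sum_{j=0}^{1}$; your $\k=n$ check was likewise detecting this. With the indices corrected, the case $t<\k-1$ is immediate (every degree-$t$ monomial then has support of size at most $t<\k$, so the ideal vanishes in that degree), and both the paper's route and yours go through without the case-juggling you were attempting.
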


\bibliographystyle{amsplain}
 
\bibliography{Lmatrix}

 \end{document}